\documentclass[conference,letterpaper]{IEEEtran}
\usepackage[left=.635in,right=.635in,top=0.75in,bottom=1.05in]{geometry}

\usepackage[utf8]{inputenc} 
\usepackage[T1]{fontenc}
\usepackage{url}
\usepackage{ifthen}
\usepackage{cite}
\usepackage[cmex10]{amsmath}

\usepackage{cite}
\usepackage[cmex10]{amsmath}
\usepackage{amssymb}
\usepackage{stfloats}
\usepackage{amsthm} 
\usepackage{wrapfig}
\usepackage{amssymb}
\usepackage{latexsym}
\usepackage{bm, bbm} 
\usepackage{epsfig}
\usepackage{graphicx}
\usepackage{epsfig}
\usepackage{multicol}
\usepackage{multirow, array}
\usepackage{psfrag}
\usepackage{float}
\usepackage{cite}
\usepackage{subfigure}
\usepackage{color}
\usepackage{mathrsfs}
\usepackage{balance}

\usepackage{tkz-graph}
\usepackage{tikz}
\usetikzlibrary{shapes.geometric}
\usepackage[noend]{algpseudocode}

\usepackage{algorithm} 
\usepackage{algorithmicx} 
\usepackage{tabularx} 

\usepackage{mathtools}
\usepackage{blkarray}
\usepackage{bigstrut}

\let\labelindent\relax
\usepackage{enumitem}

\newtheorem{lemma}{Lemma}
\newtheorem{definition}[lemma]{Definition}
\newtheorem{theorem}[lemma]{Theorem}
\newtheorem{prop}[lemma]{Proposition}
\newtheorem{corollary}[lemma]{Corollary}

\theoremstyle{plain}

\theoremstyle{plain}

\newtheorem{PreRemark}[lemma]{{\textbf{Remark}}}
    {\begin{PreRemark}\upshape}{\hfill$\square$\end{PreRemark}}

\newtheorem{PreExample}[lemma]{{\textbf{Example}}}
    {\begin{PreExample}\upshape}{\hfill$\square$\end{PreExample}}

\long\def\symbolfootnote[#1]#2{\begingroup%
\def\thefootnote{\fnsymbol{footnote}}\footnote[#1]{#2}\endgroup} 

\makeatletter
\newcommand{\multiline}[1]{%
  \begin{tabularx}{\dimexpr\linewidth-\ALG@thistlm}[t]{@{}X@{}}
    #1
  \end{tabularx}
}
\makeatother

\MakeRobust{\Call}

\makeatletter
\renewcommand*\env@matrix[1][*\c@MaxMatrixCols c]{%
  \hskip -\arraycolsep
  \let\@ifnextchar\new@ifnextchar
  \array{#1}}
\makeatother

\allowdisplaybreaks[4]

\begin{document}
\title{Low-Density Parity-Check Codes of High Girth from Permutation Polynomials}

 \author{%
   \IEEEauthorblockN{Tasawwar Hussain, Anthony G\'omez-Fonseca, and Tefjol Pllaha}
   \IEEEauthorblockA{Department of Mathematics and Statistics, University of South Florida, Tampa, FL 33620, USA \\ \{thussain, agomezfonseca, tpllaha\}@usf.edu}
 }

\maketitle

\begin{abstract}
Quasi‑cyclic low‑density parity‑check (QC‑LDPC) codes are widely used in modern communication standards due to their simple structure and efficient encoder and decoder implementations. However, the commutativity of the circulant permutation matrices used in their construction significantly limits their girth and minimum distance, affecting iterative decoding performance. In this paper, we investigate protograph-based codes constructed using permutation polynomials of higher degree, focusing on a closed family of non-commuting binomial permutation polynomials that enables these constructions to exceed both of these girth and minimum distance upper bounds while maintaining a simple representation. Simulation results demonstrate improved decoding performance compared to constructions based on affine permutation polynomials, which include QC-LDPC codes.
\end{abstract}


\vspace{-1mm}
\section{Introduction}
\label{sec:introduction}

The attractiveness of a linear code for adoption in industry standards relies, usually, on the simplicity of its structure and the benefits that can be obtained by exploiting said structure, such as the possibility to be encoded with low complexity and the efficiencies in decoder design. The family of low-density parity-check (LDPC) codes, and in particular quasi-cyclic (QC) LDPC codes, has been widely used in industry standards precisely because of those benefits. In the case of QC-LDPC codes, a low-complexity encoding using simple feedback shift-registers \cite{lcz+06} and a memory-efficient partially-parallel decoder \cite{wc07} are among the reasons why they now appear in many industry standards.

The simple structure of QC-LDPC codes, characterized by the commutativity of the circulant permutation matrices used in their construction, is however known to be a limiting factor in many aspects. 
In particular, upper bounds on both the minimum distance \cite{md01,fossorier04} and the girth of their corresponding Tanner graphs \cite{sv12} can be derived for any protograph-based construction. 
The girth \cite{mb01}, together with the number of short cycles \cite{hc06}, and other graphical structures composed of short cycles, such as trapping sets \cite{kb12_b}, plays an important role in determining their performance under iterative decoding algorithms, particularly because of the introduction of correlation during decoding iterations \cite{ru01} and the emergence of error floors. As a consequence, it is desirable to rescind from the commutativity of the permutation matrices used in the construction of protograph-based codes while maintaining the attractiveness of a compact representation. 
In this paper, we pursue this idea by proposing the construction of protograph-based codes using {\em permutation polynomials of higher degree}. 

First studied by Hermite and Dickson \cite{h1863,d1896}, permutation polynomials have found many applications in mathematics, particularly in the theory of finite fields, with connections to coding theory and combinatorial designs, as well as to cryptography \cite{d13,dy06,lm84,ccz98,d99a,d99b}. 
Affine permutation polynomials have already been used to construct both classical and quantum protograph-based codes in \cite{myp06,yk19,gg24,k25}. Since the composition of affine permutation polynomials does not commute in general, the corresponding permutation matrices do not commute, which means that the upper bounds on the minimum distance and girth that limit the performance of QC-LDPC codes do not apply. 
Moreover, as exemplified in this paper, a family of permutation polynomials that is closed under composition and inverses allows for efficient girth and cycle analysis of protograph-based codes. 
In this setting, the girth conditions can be translated into explicit conditions imposed on the coefficients or exponents of the polynomials. The goal of this paper is to exploit this framework to construct and analyze protograph-based codes with improved minimum distance and girth properties, while retaining a simple representation amenable to efficient implementation.

This paper is structured as follows. In Section \ref{sec:background}, we introduce all relevant preliminaries and notations.
In Section \ref{sec:pp_beyond_affine}, we present protograph-based codes using monomial permutation polynomials and discuss their minimum distance and girth upper bounds. Also, we present the analysis of a family of non-commuting binomial permutation polynomials that is closed under composition and inverses. We use these polynomials to construct protograph-based codes with large girth and compare them to some constructions using affine permutation polynomials.
In Section \ref{sec:simulations}, we present some simulation results of selected codes, showing an improved performance compared to those constructed using affine permutation polynomials. Lastly, we conclude the paper in Section \ref{sec:conclusion}.

\vspace{-1mm}
\section{Background}
\label{sec:background}

\subsection{Protograph-based Codes}
\label{protograph_based_codes_background}
A protograph-based code $\mathcal{C}$ is constructed in the following way. Consider a relatively small bipartite graph $G_B$, known as the protograph \cite{tho03}, with biadjacency matrix $B={(b_{ij})}_{n_c\times n_v}$, where $b_{ij}$ is a nonnegative integer for $i\in[n_c],j\in[n_v]$, and where $[l]=\{0,1,\dots,l-1\}$. In this paper, we focus on the all-ones protograph for space constraints, so $b_{ij}=1$ for $i\in[n_c],j\in[n_v]$. From $B$, we construct a scalar parity-check matrix $H$ by replacing each $b_{ij}$ with an $N\times N$ permutation matrix $H_{ij}$ for $i\in[n_c],j\in[n_v]$. 

From the parity-check matrix $H$, we construct a bipartite graph $G=(V,E)$, called a Tanner graph \cite{tan81}, by considering $H$ as its biadjacency matrix.
This bipartite graph represents the protograph-based code $\mathcal{C}$ obtained from $H$. The set $V$ is the set of vertices (or nodes) and $E$ is the set of edges. Label the vertices of $G$ by $v_a$, for $a=0,1,2,\dots,|V|-1$, and the edges by $e_b$, for $b=0,1,2,\dots,|E|-1$. Each edge $e_b$ has the form $e_b=(v_a,v_c)$, for some $v_a,v_c\in V$, and the vertices $v_a$ and $v_c$ are called the endpoints of $e_b$. A (directed) walk $W$ of length $m$ in the graph $G$ is an alternating sequence $W=v_0e_1v_1e_2\cdots v_{m-1}e_mv_m$ of vertices and edges such that $e_l=(v_{l-1},v_l)\in E$ for all $1\leq l\leq m$. The first vertex appearing in the alternating sequence, $v_0$, is called the base point of $W$. A walk $W$ is said to be closed if the two endpoints are the same, this is, when $v_0=v_m$. A closed walk $W$ is backtrackless if $e_l\neq e_{l+1}$ for all $l=1,2,\dots,m-1$. A backtrackless closed walk $W$ is tailless if $e_m\neq e_1$, and $W$ is called, in this case, a TBC walk. A cycle is a closed walk $W$ having distinct vertices and distinct edges, and if its alternating sequence has $k$ edges in it, then we call $W$ a $k$-cycle. The length of a shortest cycle is called the \textit{girth} of the graph.

\subsection{Permutation Polynomials}
\label{subsec:permutation_polynomials}
Let $R\in\{\mathbb{Z}_N,\mathbb{F}_q\}$, where $N$ is a positive integer, and $q=p^n$ for some prime $p$ and a positive integer $n$. A polynomial $f\in R[x]$ is a permutation polynomial of $R$ if the function $f:c\rightarrow f(c)$ from $R$ into itself induces a permutation. 
Associated to $f$, there is a (binary) permutation matrix that has 1 in position $(c,f(c))$ and 0 elsewhere.
Any protograph-based code $\mathcal{C}$ can be described by a polynomial parity-check matrix. 
For the all-ones protograph $G_B$, the polynomial parity-check matrix is given by $H={\left[\begin{matrix}p_{ij}(x)\end{matrix}\right]}_{n_c\times n_v}$, where each $p_{ij}(x)$, $i\in[n_c],j\in[n_v]$, is a permutation polynomial. We obtain a scalar parity-check matrix, also denoted by $H$ since there will be no ambiguity, by replacing each permutation polynomial $p_{ij}(x)$ with its corresponding permutation matrix.

Although not proposed using the framework of permutation polynomials, the well-known family of quasi-cyclic (QC) LDPC codes provides an excellent example of protograph-based codes constructed using monic affine permutation polynomials. If $P$ is a circulant permutation matrix with $P_{0,a}=1$, $a\in[N]$, then the permutation polynomial corresponding to $P$ is $x+a$. In the case of the all-ones protograph, the parity-check matrix of a QC-LDPC code is given by $H={\left[\begin{matrix}x+a_{ij}\end{matrix}\right]}_{n_c\times n_v}$, where $a_{ij}\in[N]$ for $i\in[n_c],j\in[n_v]$.

A natural generalization of QC-LDPC codes in the context of permutation polynomials, which motivates the present work, gives the class of affine permutation matrices (APM) LDPC codes. First introduced in \cite{myp06}, APM-LDPC codes are constructed by lifting a protograph using linear polynomials of the form $f(x)=ax+b$. The linear polynomial $f(x)=ax+b$ is a permutation polynomial if and only if $\gcd(a,N)=1$. 
In the case of the all-ones protograph, the parity-check matrix of an APM-LDPC code is given by $H={\left[\begin{matrix}a_{ij}x+b_{ij}\end{matrix}\right]}_{n_c\times n_v}$, where $a_{ij},b_{ij}\in[N]$ and $\gcd(a_{ij},N)=1$ for $i\in[n_c],j\in[n_v]$.
Girth \cite{yk19} and cycle analysis \cite{gg24} have been presented for these codes.

\subsection{Girth of Protograph-based Codes}
\label{subsec:girth_conditions}
In \cite{fan00}, Fan presented algebraic conditions imposed by a cycle in the Tanner graph of a protograph-based LDPC code, constructed using the all-ones protograph, on the product of certain permutation matrices in the scalar parity-check matrix $H$. In the following theorem, we generalize this result to any protograph using the notion of TBC walks. 
\begin{theorem}
\label{cycles_and_fixed_points_scalar}
Let $\mathcal{C}$ be a protograph-based code with parity-check matrix $H={\left[\begin{matrix}H_{ij}\end{matrix}\right]}_{n_c\times n_v}$. Let $W$ be a TBC walk of length $2l$ in the protograph $G_B$ having edge sequence $e_1, e_2,\dots,e_{2l}$ with associated permutation matrices $P_1, P_2,\dots,P_{2l}$. Then the Tanner graph has a $2l$-cycle if the product $P_1P_2^\top P_3P_4^\top\cdots P_{2l-1}P_{2l}^\top$ has a fixed point.
\end{theorem}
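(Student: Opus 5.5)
The plan is to lift the protograph walk $W$ to the Tanner graph, letting a fixed point of the product tell us where to start. I would then check that the lifted walk is a TBC walk of length $2l$ and extract a cycle from it.

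\emph{Setup and conventions.} Orient $W$ so that its base point is a check node $c^{(0)}$ of $G_B$. The edges $e_{2k-1}$ go from a check node to a variable node, and the edges $e_{2k}$ go from a variable node back to a check node. Every protograph edge $e$ joining check $i$ and variable $j$ corresponds to a permutation matrix $P_e$, which is $H_{ij}$ or one of its summands when $b_{ij}>1$. Its lifted edges are the pairs (check copy $r$, variable copy $s$) with $(P_e)_{r,s}=1$. For $b_{ij}>1$ I will assume, as is standard, that the $b_{ij}$ summands have disjoint supports. Then every Tanner-graph edge lies over a unique protograph edge.

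\emph{Reading off the lifted walk.} Suppose $a_0$ is a fixed point of $M=P_1P_2^\top\cdots P_{2l-1}P_{2l}^\top$. Expanding the matrix product, $M_{a_0,a_0}=1$ means there are indices $a_0,b_1,a_1,b_2,\dots,b_l,a_l=a_0$ with
\[
(P_{2k-1})_{a_{k-1},b_k}=1,\qquad (P_{2k})_{a_k,b_k}=1,\qquad k=1,\dots,l .
\]
Since each factor is a permutation matrix, these indices are uniquely determined. The sequence check $a_0$, variable $b_1$, check $a_1$, $\dots$, check $a_l=a_0$ is then a closed walk $\widetilde W$ of length $2l$ in the Tanner graph. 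Its $k$-th edge lies over $e_k$.

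\emph{TBC property.} Consecutive lifted edges $\tilde e_k,\tilde e_{k+1}$ lie over $e_k\neq e_{k+1}$, because $W$ is backtrackless. By the disjointness of supports they are therefore distinct, so $\widetilde W$ is backtrackless. In the same way, $e_{2l}\neq e_1$ gives $\tilde e_{2l}\neq\tilde e_1$, so $\widetilde W$ is tailless. Hence $\widetilde W$ is a TBC walk of length $2l$.

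\emph{From a TBC walk to a cycle.} This is the delicate step, since a TBC walk need not have distinct vertices. If the vertices of $\widetilde W$ are distinct, it is exactly a $2l$-cycle. Otherwise, choose a repeated vertex pair $v_s=v_t$ with $s<t$ and $t-s$ minimal. The segment from $v_s$ to $v_t$ is a closed walk whose internal vertices are distinct. It is backtrackless, so its length is not $2$, and the graph is bipartite, so its length is even; hence it is a cycle of length at least $4$ and at most $2l$.

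I therefore expect the main obstacle to be interpretive rather than technical. The statement should be read, as in Fan's original result, as asserting that the Tanner graph contains a cycle of length at most $2l$, which for girth purposes means the girth is at most $2l$. It is an honest $2l$-cycle precisely when the lifted vertices $a_0,b_1,\dots,b_l$ are pairwise distinct in their respective node classes. I would state this reading explicitly and then conclude.
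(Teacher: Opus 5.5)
The paper gives no proof of this theorem. It is stated as a generalization of Fan's result and then used directly, via Corollary~\ref{cycles_and_fixed_points_pp}, in the girth bounds. So there is no competing route to compare against: your lifting argument is the standard one the paper takes for granted, and it is correct. Following the unique path of nonzero entries through $P_1P_2^\top\cdots P_{2l-1}P_{2l}^\top$ from the fixed point gives a closed walk of length $2l$ lying over $W$. Backtracklessness lifts because the projection to $G_B$ is well defined on edges. This is exactly where your disjoint-supports assumption for $b_{ij}>1$ is needed, an assumption the paper leaves implicit. Your minimal-repeat argument then correctly extracts a cycle of length between $4$ and $2l$. That last step uses only backtracklessness, so the tailless property of the lift is not needed for the conclusion.

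Your caveat about the conclusion is more than interpretive: the literal statement is false. Back it with an example rather than presenting it as a choice of reading. Take the $2\times 2$ all-ones protograph with every $H_{ij}=I_N$, and let $W$ traverse the protograph $4$-cycle twice. This is a TBC walk of length $8$ whose product is $I_N$. Yet the Tanner graph is $N$ disjoint copies of $K_{2,2}$ and contains no $8$-cycle. What is provable is a cycle of length at most $2l$, i.e.\ $\mathrm{girth}(H)\le 2l$. That is all the paper uses downstream: the walks in the proofs of Theorems~\ref{girth_upper_bound_12_powers_of_x} and~\ref{girth_upper_bound_binomials} repeat edges, and only girth bounds are concluded.

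A smaller point concerns your opening step. The product $P_1P_2^\top\cdots P_{2l}^\top$ encodes $W$ only when the base point is a check node, with rows of $H_{ij}$ indexing check copies. This is an implicit hypothesis of the statement, not something you can arrange by re-orienting $W$, since changing the base point changes the product. For a walk based at a variable node, the literal product does not correspond to any walk in the Tanner graph. State this as an assumption rather than a normalization.
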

In the process of choosing the permutation matrices to construct a parity-check matrix $H$ with target girth $g$, the condition in Theorem \ref{cycles_and_fixed_points_scalar} must be verified for every TBC walk of length $2l<g$ in the protograph. If the lifting factor $N$ is large, computing the product $P_1P_2^\top P_3P_4^\top\cdots P_{2l-1}P_{2l}^\top$ over all TBC walks becomes computationally prohibitive. To reduce the computational demand,  
we use the fact that this product of permutation matrices corresponds to the composition of their corresponding permutation polynomials, and we obtain the following immediate result. 
\begin{corollary}
\label{cycles_and_fixed_points_pp}
Let $\mathcal{C}$ be a protograph-based code with parity-check matrix $H={\left[\begin{matrix}H_{ij}\end{matrix}\right]}_{n_c\times n_v}$. Let $W$ be a TBC walk of length $2l$ in the protograph $G_B$ having edge sequence $e_1, e_2,\dots,e_{2l}$ with associated permutation matrices $P_1, P_2,\dots,P_{2l}$. If the corresponding permutation polynomials are $p_1, p_2,\dots,p_{2l}$, then the Tanner graph has a $2l$-cycle if the composition $p_{2l}^{-1} \circ p_{2l-1} \circ \cdots \circ p_4^{-1} \circ p_3 \circ p_2^{-1} \circ p_1$ has a fixed point.
\end{corollary}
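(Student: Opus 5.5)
The plan is to reduce the statement directly to Theorem \ref{cycles_and_fixed_points_scalar}. Under the paper's convention, a permutation polynomial $p$ corresponds to the permutation matrix $P$ with a $1$ in position $(c,p(c))$. It then suffices to show two things: that the matrix product $P_1P_2^\top P_3P_4^\top\cdots P_{2l-1}P_{2l}^\top$ is the permutation matrix of the composition $p_{2l}^{-1}\circ p_{2l-1}\circ\cdots\circ p_2^{-1}\circ p_1$, and that fixed points correspond under this identification.

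First I will record the dictionary between matrices and maps. Writing $e_c$ for the standard row vector indexed by $c\in R$, the convention gives $e_cP=e_{p(c)}$. So $P$ acts on row vectors by applying $p$ to the index.

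Next, $P$ is a permutation matrix, so $P^\top=P^{-1}$. Since $P$ sends $e_c$ to $e_{p(c)}$, its inverse sends $e_{p(c)}$ back to $e_c$; that is, $e_dP^\top=e_{p^{-1}(d)}$. Hence $P^\top$ is the permutation matrix of $p^{-1}$. This inverse exists as a function because $p$ is a permutation polynomial.

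Then I will multiply the row vector by the product from the left and read off the order. We get $e_c P_1 = e_{p_1(c)}$, then $e_{p_1(c)}P_2^\top=e_{p_2^{-1}(p_1(c))}$, and so on. Inductively,
\[
e_c\,P_1P_2^\top\cdots P_{2l-1}P_{2l}^\top = e_{(p_{2l}^{-1}\circ p_{2l-1}\circ\cdots\circ p_2^{-1}\circ p_1)(c)}.
\]
This explains why the composition in the statement appears in reverse order relative to the matrix product.

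Finally, the product has a fixed point $c$, meaning a diagonal entry equal to $1$ (equivalently $e_cM=e_c$), exactly when the composed map fixes $c$. Theorem \ref{cycles_and_fixed_points_scalar} then gives the $2l$-cycle in the Tanner graph.

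The only delicate point is the bookkeeping of conventions. One must check that row-vector action, rather than column action, matches the $(c,p(c))$ convention, since the column convention would reverse the composition order. One must also check that the transposes correspond to inverses rather than to the polynomials themselves. Neither is a real obstacle: fixed points of a map and of its inverse coincide, and cyclic reordering preserves the existence of fixed points. Thus even a convention slip would not affect the conclusion, only the displayed form of the composition.
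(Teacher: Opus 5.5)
Your proposal is correct and follows the paper's route: the paper obtains the corollary immediately from Theorem \ref{cycles_and_fixed_points_scalar}, using that the matrix product corresponds to the composition of the polynomials, and your row-vector bookkeeping ($e_cP=e_{p(c)}$, $P^\top\leftrightarrow p^{-1}$, diagonal $1$ $\leftrightarrow$ fixed point) just spells out that step. One small correction to your closing aside: a row/column convention slip only replaces the composed map by its inverse, which has the same fixed points, but identifying $P^\top$ with $p$ instead of $p^{-1}$ would in general change the map and its fixed points, so that slip would matter (your main argument correctly avoids it).
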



\section{Beyond Affine Permutation Polynomials}
\label{sec:pp_beyond_affine}
In this paper, we consider two families of permutation polynomials that are closed under composition and inverses, which facilitates girth and cycle analysis.

\subsection{Monomials and Girth and Minimum Distance Upper Bounds}
Let $\mathbb{F}_q$ be the finite field with $q$ elements. The polynomial $f(x)=x^n$ is a permutation polynomial of $\mathbb{F}_q$ if and only if $\gcd(n,q-1)=1$. In this section, we propose the use of these permutation polynomials to construct protograph-based codes. We consider the $n_c\times n_v$ all-ones protograph for space constraints. Let $\mathcal{C}$ be a protograph-based LDPC code with parity-check matrix $H={\left[\begin{matrix} x^{l_{ij}}\end{matrix}\right]}_{n_c\times n_v}$, where $l_{ij} \in [q]$ satisfies $\gcd(l_{ij},q-1)=1$ for $i\in[n_c],j\in[n_v]$.

As with QC-LDPC codes, the commutativity of the permutation matrices obtained from the permutation polynomials of the form $f(x)=x^n$ forces upper bounds on both the girth of the Tanner graph and the minimum distance of the code $\mathcal{C}$ \cite{fossorier04,md01,sv12}.
\begin{theorem}
\label{girth_upper_bound_12_powers_of_x}
Let $\mathcal{C}$ be a protograph-based code with parity-check matrix $H$ with monomials. 
If $n_c\geq2$ and $n_v\geq3$, then $\text{girth}(H)\leq12$.
\end{theorem}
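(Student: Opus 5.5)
We adapt the classical argument for QC-LDPC codes: find a TBC walk of length $12$ in the $2\times 3$ all-ones protograph whose associated composition of permutation polynomials is the identity, and hence has a fixed point. By Corollary~\ref{cycles_and_fixed_points_pp} this forces a $12$-cycle (or a shorter one) in the Tanner graph. It suffices to work inside any $2\times 3$ submatrix of $H$, say rows $0,1$ and columns $0,1,2$. A lifted subgraph of the Tanner graph of $H$ is a subgraph, so its short cycles persist.

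The key algebraic fact is that monomial permutations of $\mathbb{F}_q$ form an abelian group under composition. Indeed $x^a\circ x^b=x^{ab}$, exponents are taken modulo $q-1$ on $\mathbb{F}_q^*$, and $0\mapsto 0$ always. The inverse of $x^{l}$ is $x^{l'}$ with $ll'\equiv 1 \pmod{q-1}$. So the composition along a walk is $x^{e}$ with $e$ equal to the product of the $l_{ij}^{\pm1}$ modulo $q-1$. Moreover, $0$ is a fixed point of every monomial, and so is $1$. Thus every composition in Corollary~\ref{cycles_and_fixed_points_pp} has a fixed point, namely $0$ or $1$. Applied naively to a $4$-cycle walk, this would give girth $4$, which is a suspiciously strong conclusion.

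To see what the fixed point $0$ actually gives, look at the rows indexed by $0$. We have $H_{ij}$ with a $1$ in position $(0,0)$ for every $i,j$. Hence check node $0$ of every row block is adjacent to variable node $0$ of every column block. For $n_c\ge2$ and $n_v\ge 2$ this already yields a $4$-cycle: check $(0,0)$, variable $(0,0)$, check $(1,0)$, variable $(1,0)$. Therefore the girth is at most $4\le 12$, and the bound holds trivially. The same holds with the element $1$.

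The main obstacle is that the intended content of the theorem is presumably the commutativity-based bound, valid for any abelian family. I would record that argument too, as it is robust. Take the TBC walk of length $12$ in the $2\times3$ protograph used by Fossorier and by Smarandache--Vontobel. Its edge sequence, written as the product of permutations, is
\[
P_{00}P_{10}^{-1}P_{11}P_{01}^{-1}P_{02}P_{12}^{-1}P_{10}P_{00}^{-1}P_{01}P_{11}^{-1}P_{12}P_{02}^{-1}.
\]
In this product each matrix appears once with exponent $+1$ and once with $-1$. Since the $P_{ij}$ commute, the product equals the identity, which fixes every point. Care is needed that the resulting closed path in the Tanner graph is a genuine cycle rather than a union of shorter cycles. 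If it is not, it contains a cycle of length less than $12$, so in either case the girth is at most $12$. Verifying that this walk is backtrackless and tailless in the protograph is a routine check of consecutive edges.
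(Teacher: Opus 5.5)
Your proposal is correct and contains two proofs. The second, the commutativity argument, is the paper's own proof. The paper takes the same length-$12$ TBC walk in a $2\times 3$ submatrix $[x^a,x^b,x^c;x^d,x^e,x^f]$ (yours with columns $1$ and $2$ interchanged). It observes that the exponent $ad^{-1}fc^{-1}be^{-1}da^{-1}cf^{-1}eb^{-1}$ collapses to $1$, so the composition is $x$, which has fixed points.

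Your primary argument is genuinely different and strictly stronger. Since $0^{l}=0$ and $1^{l}=1$ for every admissible exponent, all blocks $H_{ij}$ share a common $1$ in the same diagonal position. For odd $q$ the same holds for $-1$, because $\gcd(l,q-1)=1$ forces $l$ odd. Hence any $2\times 2$ all-ones sub-protograph already yields a $4$-cycle. So the girth is exactly $4$, and only $n_v\ge 2$ is needed. This survives the convention, mentioned later in the paper, of deleting the rows and columns indexed by $0$, since $1$ is still a common fixed point.

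Your route shows that the monomial construction is degenerate and that the bound of $12$ is far from tight for it. The paper's route (your second argument) uses only commutativity. It therefore still applies after deleting every common fixed point, i.e.\ restricting to any invariant subset, and it applies to any abelian family such as circulants. That is precisely the obstruction the non-commuting binomials are later shown to overcome.

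You also note that a fixed point yields a closed TBC walk in the lift, and hence a cycle of length at most $12$ rather than necessarily a $12$-cycle. That reading is more careful than the wording of Corollary~\ref{cycles_and_fixed_points_pp}.
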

\begin{proof}
Consider any $2\times3$ submatrix $[x^a, x^b, x^c; x^d, x^e, x^f]$
of $H$ and let $W=[e_a, \allowbreak e_d,\allowbreak e_f, \allowbreak e_c, \allowbreak e_b, \allowbreak e_e, \allowbreak e_d, \allowbreak e_a, \allowbreak e_c, \allowbreak e_f, \allowbreak e_e, \allowbreak e_b]$, where $e_k$ is the edge associated to $x^k$. Then $W$ is a TBC walk in the protograph with 
$x^{a d^{-1} f c^{-1} b e^{-1} d a^{-1} c f^{-1} e b^{-1}}=x.$
Since $x$ has fixed points, 12-cycles are unavoidable.
\end{proof}
\begin{theorem}
\label{minimum_distance_upper_bound_powers_of_x}
Let $\mathcal{C}$ be a protograph-based code with parity-check matrix $H$ with monomials. 
Then the minimum distance of $\mathcal{C}$ is upper bounded by $(n_c+1)!$.
\end{theorem}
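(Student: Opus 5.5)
We follow the argument of MacKay–Davey and Smarandache–Vontobel for QC-LDPC codes: build an explicit nonzero codeword from a permanent expansion over a commutative ring. The key observation is that the permutation matrices $P_l$ associated with $x^l$ commute, because $x^a\circ x^b=x^{ab}=x^b\circ x^a$. They therefore generate a commutative subalgebra $\mathcal{A}$ of $\mathbb{F}_2^{N\times N}$, with $N=q$. Over a commutative ring, determinant theory, including Laplace expansion and the Cayley–Hamilton-type identity $\mathrm{adj}(M)M=\det(M)I$, works as usual.

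\textbf{Reduction.} We may assume $n_v\ge n_c+1$; otherwise the bound is vacuous, or we pass to a code that is a subcode in the appropriate sense. Pick any $n_c\times(n_c+1)$ block submatrix $H'=[P_{ij}]$ of $H$, with $P_{ij}$ the matrix of $x^{l_{ij}}$. Codewords supported on these $n_c+1$ block columns, padded with zeros, are codewords of $\mathcal{C}$, so it suffices to work with $H'$.

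\textbf{Construction.} For $j\in[n_c+1]$, let $M_j$ be the $n_c\times n_c$ block matrix obtained by deleting block column $j$ of $H'$. Define
\[
D_j=\det\nolimits_{\mathcal{A}}(M_j)=\sum_{\sigma}\prod_{i}P_{i,\sigma(i)}\in\mathcal{A},
\]
where signs vanish over $\mathbb{F}_2$, so this equals the permanent. Let $\mathbf{u}$ be any vector in $\mathbb{F}_2^N$ and set $\mathbf{c}=(D_0\mathbf{u},D_1\mathbf{u},\dots,D_{n_c}\mathbf{u})$. For each block row $i$,
\[
\sum_j P_{ij}D_j
\]
is the determinant, over $\mathcal{A}$, of the $(n_c+1)\times(n_c+1)$ matrix obtained by appending block row $i$ of $H'$ to $H'$. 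That matrix has a repeated row, so its determinant is $0$ in the commutative ring $\mathcal{A}$ (characteristic $2$). Hence $H'\mathbf{c}=0$.

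\textbf{Weight bound.} Each $D_j$ is a sum of $n_c!$ permutation matrices, so $D_j\mathbf{u}$ has weight at most $n_c!$ when $\mathbf{u}$ has weight one. Summing over $n_c+1$ blocks gives $\mathrm{wt}(\mathbf{c})\le(n_c+1)\,n_c!=(n_c+1)!$.

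\textbf{Main obstacle: nonzeroness.} The main difficulty is that all $D_j$ might vanish, since the sum of $n_c!$ permutation matrices can cancel mod $2$. The standard fix is an induction, as in Smarandache–Vontobel. If every $D_j=0$, or every $D_j\mathbf{u}=0$ for every unit $\mathbf{u}$, we pass to a smaller square subconfiguration. We look for a sub-block $n'\times(n'+1)$ with $n'<n_c$ whose cofactors are not all zero; such a submatrix exists, since a single entry $P_{ij}$ is nonzero. That choice gives a nonzero codeword of weight at most $(n'+1)!\le(n_c+1)!$, after deleting the remaining block rows. This is legitimate only if we instead use all rows of $H'$ and rely on the following fact: if the maximal cofactors of an $n'\times(n'+1)$ submatrix vanish, then the rows are "dependent over $\mathcal{A}$."

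Concretely, we will take $n'$ to be the largest size for which some $n'\times n'$ minor $D\neq0$. We extend by one column and use cofactors along an added row. Expansion of the $(n'+1)\times(n'+1)$ minors, all zero by maximality, then shows that all block rows are annihilated. This is the step needing the most care. Once it is settled, the codeword is nonzero because at least one $D_j\neq0$, and choosing a unit $\mathbf{u}$ outside $\ker D_j$ gives $D_j\mathbf{u}\neq 0$.
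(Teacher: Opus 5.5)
Your proposal is correct and follows the paper's route. The paper's proof only observes that the matrices of the monomials $x^{l}$ commute and then invokes Theorem~2 of MacKay--Davey; you reconstruct that cofactor (permanent) codeword, and your maximal-nonzero-minor fix for a vanishing codeword works as sketched: an appended row from $I$ repeats a row, any other appended row gives an $(n'+1)$-minor that vanishes by maximality, and a column outside $J$ exists since $n'\le n_c<n_c+1$. The one loose end is calling the case $n_v\le n_c$ ``vacuous.'' That is not accurate, since for general commuting permutation matrices the bound can fail there, and the paper's citation also tacitly assumes $n_v>n_c$. For monomials the case is harmless: every $x^{l}$ fixes $1$, so when $n_v\ge 2$ the vector with a $1$ in position $1$ of any two block columns is already a codeword of weight $2\le(n_c+1)!$.
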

\begin{proof}
The permutation matrices obtained from permutation polynomials of the form $f(x)=x^n$ are commutative, so Theorem 2 in \cite{md01} applies. 
\end{proof}
Theorems \ref{girth_upper_bound_12_powers_of_x} and \ref{minimum_distance_upper_bound_powers_of_x} clearly show that the commutativity of the permutation matrices obtained from permutation polynomials of the form $f(x)=x^n$ is limiting. 
In addition, such monomials have typically several fixed points which further deteriorate their performance.
To prevent these upper bounds from holding, notice that it is enough to impose the condition that one of the permutation matrices in each $2\times3$ submatrix of the polynomial parity-check matrix $H$, like that in the proof of Theorem \ref{girth_upper_bound_12_powers_of_x}, does not commute with the others. 

\subsection{Permutation Polynomials $f(x)=ax^{(q+1)/2}+bx$}
\label{subsec:pp_binomials}

Let $\mathbb{F}_q$ be the finite field with $q$ elements, $q$ odd. 
We can partition $\mathbb{F}_q$ as $\mathbb{F}_q=\{0\}\cup S\cup N$ with 
$S=\{x\in\mathbb{F}_q^\times\mid \exists a \in \mathbb{F}_q^\times, x = a^2\}$ 
and $N=\{x\in \mathbb{F}_q^\times\mid \forall a\in \mathbb{F}_q, x \neq a^2\}$. 
Now consider the family of polynomials of the form $f(x)=ax^{(q+1)/2}+bx \in \mathbb{F}_q[x]$. This is a family of non-commuting polynomials that is closed under composition and inverses. These polynomials have been studied in the literature and conditions on their coefficients to make them permutation polynomials are known \cite{c62,nr82}. In this section, we use these polynomials to construct protograph-based codes. First, we lay out some structural results.

\begin{theorem}[\hspace{-1sp}\cite{nr82}]
\label{composition_of_binomials}
Let $q$ be odd and consider the polynomials $f_1(x)=ax^{(q+1)/2}+bx,f_2(x)=cx^{(q+1)/2}+dx\in\mathbb{F}_q[x]$. Then the composition $(f_2\circ f_1)(x)$ is given by
\begin{equation*}
(ce+da)x^{(q+1)/2}+(cf+db)x\mod(x^q-x)
,\end{equation*}
where $e+f={(a+b)}^{(q+1)/2}$ and $e+f={(b-a)}^{(q+1)/2}$.
\end{theorem}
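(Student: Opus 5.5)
The plan is to compare both sides as functions on $\mathbb{F}_q$, using the quadratic character, and then pass to polynomials reduced modulo $x^q-x$. Put $m=(q+1)/2$ and $t=(q-1)/2$. I read the second condition in the statement as a typo for $e-f=(b-a)^{m}$; the computation below confirms this is what is needed. Two polynomials of degree $<q$ agree modulo $x^q-x$ exactly when they induce the same function on $\mathbb{F}_q$. Both $(f_2\circ f_1)(x)$ reduced modulo $x^q-x$ and the claimed binomial $(ce+da)x^{m}+(cf+db)x$ have degree $<q$. So it suffices to check pointwise equality on $\mathbb{F}_q=\{0\}\cup S\cup N$.

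The first step is Euler's criterion: $x^{t}=1$ for $x\in S$, $x^t=-1$ for $x\in N$, and $0^t=0$. Hence $x^{m}=x\cdot x^t$ equals $x$ on $S$, $-x$ on $N$, and $0$ at $0$. Consequently $f_1(x)=x(ax^t+b)$ acts as
\[
f_1(x)=(a+b)x \ \text{on } S, \qquad f_1(x)=(b-a)x \ \text{on } N,
\]
with $f_1(0)=0$.

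The second step, which is the key one, is to compute the single non-linear term $f_1(x)^{m}$ pointwise. Since $u\mapsto u^{m}$ is multiplicative, we get $f_1(x)^{m}=(a+b)^{m}x^{m}=(a+b)^{m}x$ for $x\in S$. For $x\in N$ we get $f_1(x)^{m}=(b-a)^{m}x^{m}=-(b-a)^{m}x$. Now look for $e,f$ with $f_1(x)^{m}=ex^{m}+fx$ as functions. The right-hand side equals $(e+f)x$ on $S$ and $(f-e)x$ on $N$, and both sides vanish at $0$. Matching the two cases forces
\[
e+f=(a+b)^{m}, \qquad e-f=(b-a)^{m}.
\]
Since $q$ is odd, $2$ is invertible in $\mathbb{F}_q$, so this linear system has the unique solution $e=\tfrac12\big((a+b)^m+(b-a)^m\big)$ and $f=\tfrac12\big((a+b)^m-(b-a)^m\big)$.

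Finally, substitute into $f_2(f_1(x))=c\,f_1(x)^{m}+d\,f_1(x)$. This gives $c(ex^m+fx)+d(ax^m+bx)=(ce+da)x^m+(cf+db)x$ as functions on $\mathbb{F}_q$, which yields the congruence modulo $x^q-x$ by the observation in the first paragraph.

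The only delicate point is the sign bookkeeping on $N$ in the second step, where $x^m=-x$ enters twice: once through $f_1$ and once through $(\cdot)^m$. That bookkeeping is exactly what pins down the corrected second relation $e-f=(b-a)^m$.
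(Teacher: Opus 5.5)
Your proof is correct, and so is your reading of the second condition as $e-f=(b-a)^{(q+1)/2}$. As printed, the two conditions $e+f=(a+b)^{(q+1)/2}$ and $e+f=(b-a)^{(q+1)/2}$ fix only $e+f$, not $e$ and $f$ separately. They are also inconsistent unless $(a+b)^{(q+1)/2}=(b-a)^{(q+1)/2}$. Your sign bookkeeping on $N$ is exactly what identifies the intended relation.

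The paper gives no proof of its own here; the result is quoted from Niederreiter--Robinson, so there is no in-paper argument to compare against. Your argument is a self-contained verification. It uses the same splitting $f_1(x)=(ax^{t}+b)x$, with $x^t\in\{0,\pm1\}$, that the paper uses right afterwards for the piecewise-affine description \eqref{pw-fn}.

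If you want to avoid the detour through functions, the same computation can be done purely in $\mathbb{F}_q[x]$:
\begin{itemize}
\item write $f_1(x)^m=x^m(ax^t+b)^m$;
\item reduce $(ay+b)^m\equiv e+fy \pmod{y^2-1}$, obtaining $e\pm f$ by evaluating at $y=\pm1$;
\item substitute $y=x^t$, and use $x^m(x^{2t}-1)=x^{m-1}(x^q-x)$ together with $x^{m+t}=x^q\equiv x$.
\end{itemize}
Both versions rest on the same fact, namely that $x^{2t}=1$ for every $x\neq0$. Yours makes the $S$/$N$ case split explicit, which is the form the paper actually exploits in its type-$\mathsf{S}$/type-$\mathsf{N}$ composition analysis.
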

\begin{theorem}[\hspace{-1sp}\cite{nr82}]
\label{binomials_permutation_polynomials}
The polynomial $f(x)=ax^{(q+1)/2}+bx$ is a permutation polynomial of $\mathbb{F}_q$ if and only if $b^2-a^2 \in S$.
\end{theorem}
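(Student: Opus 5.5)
The plan is to show that $f$ acts on $S$ and on $N$ as two separate linear maps, and then to read off bijectivity from those two maps. Throughout, $\eta(x)=x^{(q-1)/2}$ denotes the quadratic character. Since $q$ is odd, $\eta(x)=1$ for $x\in S$, $\eta(x)=-1$ for $x\in N$, and $\eta(0)=0$.

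First I write $f(x)=x\bigl(a\eta(x)+b\bigr)$. This gives $f(0)=0$, $f(x)=(a+b)x$ for $x\in S$, and $f(x)=(b-a)x$ for $x\in N$. Bijectivity of $f$ on $\mathbb{F}_q$ is therefore equivalent to bijectivity of $f$ on $\mathbb{F}_q^\times$, because $0$ is fixed and every nonzero $x$ has nonzero image exactly when the relevant multiplier is nonzero.

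For necessity, suppose $a+b=0$. Then all of $S$ is sent to $0$, and since $|S|=(q-1)/2\ge 1$, the map $f$ is not injective. The same argument rules out $b-a=0$. So a permutation forces $a+b\neq 0$ and $b-a\neq 0$. In that case $f$ maps $S$ onto the coset $(a+b)S$ and maps $N$ onto $(b-a)N$. Each of these equals either $S$ or $N$, according to whether the multiplier is a square, since $S$ is an index-2 subgroup of the cyclic group $\mathbb{F}_q^\times$ and $N$ is its other coset.

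Now I split into cases. If $\eta(a+b)=\eta(b-a)$, then the images are $(a+b)S$ and $(b-a)N$. When both multipliers are squares these are $S$ and $N$; when both are non-squares they are $N$ and $S$. Either way the images are disjoint with union $\mathbb{F}_q^\times$, so $f$ is a bijection. If instead $\eta(a+b)\neq\eta(b-a)$, then both $S$ and $N$ land in the same coset, which has only $(q-1)/2$ elements, so $f$ cannot be injective. Hence $f$ is a permutation if and only if $a+b$ and $b-a$ are both nonzero and $\eta(a+b)=\eta(b-a)$. By multiplicativity of $\eta$, this is equivalent to $\eta\bigl((b+a)(b-a)\bigr)=1$, that is, $b^2-a^2\in S$. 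This is exactly the condition in Theorem~\ref{binomials_permutation_polynomials}.

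The step needing the most care is the translation into $b^2-a^2\in S$. The point is that $S$ excludes $0$, so the single condition $b^2-a^2\in S$ automatically encodes the nonvanishing of both $a+b$ and $b-a$. The remainder of the argument is simply the coset count above.
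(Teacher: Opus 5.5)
Your proof is correct and complete. The paper gives no proof of this statement; it cites~\cite{nr82}. Your argument is the standard one, built on the decomposition $f(x)=(a\psi(x)+b)x$ that the paper proves in the piecewise-affine Proposition right afterwards, plus the index-$2$ coset count. It could therefore be inserted essentially as written, with that Proposition moved first, to make the section self-contained. A side benefit is your intermediate characterization ($a+b\neq 0\neq b-a$ and $\eta(a+b)=\eta(b-a)$). It justifies Definition~\ref{definition_type_square_nonsquare}, which the paper leaves implicit: every such permutation binomial has $b\pm a$ both in $S$ or both in $N$, so the types $\bm{\mathsf{S}}$ and $\bm{\mathsf{N}}$ cover all cases.
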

\begin{prop}
The permutation polynomial $f(x)=ax^{(q+1)/2}+bx$ is piecewise affine. 
\end{prop}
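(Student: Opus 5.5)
The plan is to evaluate $x^{(q+1)/2}$ on each block of the partition $\mathbb{F}_q=\{0\}\cup S\cup N$ introduced above. On each block this power is a fixed scalar multiple of $x$, so $f$ is linear on each piece. Write $x^{(q+1)/2}=x\cdot x^{(q-1)/2}$. By Euler's criterion, $\chi(x):=x^{(q-1)/2}$ equals $1$ for $x\in S$, equals $-1$ for $x\in N$, and equals $0$ for $x=0$.

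\textbf{Step 1 (Euler's criterion).} We justify Euler's criterion briefly. $\mathbb{F}_q^\times$ is cyclic of even order $q-1$, and $S$ is its unique subgroup of index $2$. If $x=a^2$, then $x^{(q-1)/2}=a^{q-1}=1$. Since $\left(x^{(q-1)/2}\right)^2=1$, the value $x^{(q-1)/2}$ always lies in $\{\pm1\}$. The polynomial $y^{(q-1)/2}-1$ has at most $(q-1)/2=|S|$ roots, and all of them are the elements of $S$. Therefore every $x\in N$ satisfies $x^{(q-1)/2}=-1$.

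\textbf{Step 2 (piecewise formula).} From Step 1 we obtain
\begin{equation*}
f(x)=\begin{cases}(a+b)x, & x\in S,\\ (b-a)x, & x\in N,\\ 0, & x=0,\end{cases}
\end{equation*}
so $f$ is affine (indeed linear) on each piece of the partition. The point $0$ may be absorbed into either piece. This is consistent with the constants $(a+b)^{(q+1)/2}$ and $(b-a)^{(q+1)/2}$ appearing in Theorem~\ref{composition_of_binomials}.

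\textbf{Step 3 (optional remark).} One could add a link to Theorem~\ref{binomials_permutation_polynomials}. The condition $b^2-a^2=(b+a)(b-a)\in S$ says that $a+b$ and $b-a$ are both nonzero and have the same quadratic character. In that case both linear maps either preserve $S$ and $N$, or both swap them, so the pieces are mapped bijectively. This step is not needed for the statement itself.

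The only step requiring care is Euler's criterion, and specifically showing that $x^{(q-1)/2}=-1$ on $N$. The root-counting argument in Step 1 settles it. Everything else is direct substitution.
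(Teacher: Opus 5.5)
Your proof is correct and follows the paper's argument: factor $f(x)=(ax^{(q-1)/2}+b)x$, recognize $x^{(q-1)/2}$ as the quadratic character taking the values $0,1,-1$ on $\{0\}$, $S$, $N$, and read off $f(x)=(b+a)x$ on $S$ and $f(x)=(b-a)x$ on $N$. The only difference is that you supply a root-counting proof of Euler's criterion, which the paper takes as known.
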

\begin{proof}
Consider the quadratic character $\psi : \mathbb{F}_q \rightarrow \{0,1,-1\}$ defined by $\psi(x)=x^{(q-1)/2}$. Then
\vspace{-2mm}
\begin{equation*}
\psi(x) = 
\begin{cases} 
0, & \text{if } x=0, \\
1, & \text{if } x \in S, \\
-1, & \text{if } x \in N.
\end{cases}
\end{equation*}
We can rewrite the polynomial $f(x)=ax^{(q+1)/2}+bx$ as
\begin{align*}
f(x) &= ax^{(q+1)/2}+bx = (ax^{(q-1)/2}+b)x =(a\psi(x)+b)x,
\end{align*}
and by substituting the values for $\psi(x)$, we get
\begin{equation}
\label{pw-fn}
f(x) = 
\begin{cases} 
0, & \text{if } x=0, \\
f_{+}(x):=(b+a)x, & \text{if } x \in S, \\
f_{-}(x):=(b-a)x, & \text{if } x \in N.
\end{cases}
\end{equation}
This concludes the proof.
\end{proof}
\begin{theorem}
The permutation polynomial $f(x)=ax^{(q+1)/2}+bx$ has $x=0$ as its only fixed point if and only if $b\pm a\neq 1$.
\end{theorem}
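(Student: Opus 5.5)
The plan is to use the piecewise-affine description \eqref{pw-fn} from the preceding proposition, which splits the fixed-point equation $f(x)=x$ into three independent linear equations, one on each of $\{0\}$, $S$ and $N$. We read the condition ``$b\pm a\neq 1$'' as the conjunction $b+a\neq1$ and $b-a\neq1$. The claim then becomes: $f$ has a nonzero fixed point if and only if $b+a=1$ or $b-a=1$.

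First, $x=0$ is always a fixed point, since $f(0)=0$. So the question is only whether a nonzero fixed point exists.

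Take $x\in S$. Then $f(x)=(b+a)x$, so $f(x)=x$ is equivalent to $(b+a-1)x=0$. Since $x\neq 0$ in the field $\mathbb{F}_q$, this holds exactly when $b+a=1$. In the same way, for $x\in N$ we have $f(x)=x$ if and only if $b-a=1$. Because $\mathbb{F}_q^\times=S\cup N$, $f$ has a nonzero fixed point if and only if one of the following holds:
\begin{itemize}
\item $b+a=1$ and $S\neq\emptyset$;
\item $b-a=1$ and $N\neq\emptyset$.
\end{itemize}

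The only point needing care, and the one step I would flag, is showing that $S$ and $N$ are both nonempty, so that the two conditions really are equivalent to $b+a=1$ or $b-a=1$. For $S$ this is immediate, since $1=1^2\in S$. For $N$, use that $q$ is odd, so $\mathbb{F}_q^\times$ is cyclic of even order $q-1$. The squaring map on it is therefore $2$-to-$1$, giving $|S|=(q-1)/2$ and hence $|N|=(q-1)/2\geq1$. Equivalently, $\psi$ takes the value $-1$ somewhere.

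With this in hand, both directions follow. If $b+a\neq1$ and $b-a\neq1$, no nonzero $x$ is fixed, so $0$ is the only fixed point. Conversely, if $b+a=1$, every element of $S$ is fixed, and if $b-a=1$, every element of $N$ is fixed. In either case there is a nonzero fixed point.
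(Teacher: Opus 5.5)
Your proof is correct and follows the paper's own argument, which simply says the result is clear from the piecewise-affine form \eqref{pw-fn}. You have written out that case analysis in full, including the check that both $S$ and $N$ are nonempty for odd $q$; the paper leaves that check implicit.
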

\begin{proof}
This is clear from \eqref{pw-fn}.
\end{proof}

In order to determine the girth conditions using Theorem \ref{cycles_and_fixed_points_pp}, we are interested in analyzing how the compositions of $n$ of these polynomials look like. Here, we use the fact that if $a,b\in S$ and $c,d\in N$, then $ab\in S,\ cd\in S,\ ac\in N$. Consider the following two cases:

\textbf{Case 1:} Assume that $b_i\pm a_i\in S$. If $x\in S$, then $f_i(x)=(b_i+a_i)x\in S$, and if $x\in N$, then $f_i(x)=(b_i-a_i)x\in N$. Hence, the composition $F=f_n\circ\cdots\circ f_2\circ f_1$ is given by 
\begin{equation*}
F(x) = 
\begin{cases}
0, & \text{if } x=0,  \\
M_Sx=\prod_{i=1}^n(b_i+a_i)x, & \text{if } x \in S, \\
M_Nx=\prod_{i=1}^n(b_i-a_i)x, & \text{if } x \in N.
\end{cases}
\end{equation*}

\textbf{Case 2:} Assume that $b_i\pm a_i\in N$. If $x\in S$, then $f_i(x)=(b_i+a_i)x\in N$, and if $x\in N$, then $f_i(x)=(b_i-a_i)x\in S$. Hence, the composition $F=f_n\circ\cdots\circ f_2\circ f_1$ is given by 
\begin{equation*}
F(x) = 
\begin{cases}
0, & \text{if } x=0,  \\
M_S'x=\prod_{i=1}^n\left(b_i+{(-1)}^{i+1}a_i\right)x, & \text{if } x \in S, \\
M_N'x=\prod_{i=1}^n\left(b_i+{(-1)}^{i}a_i\right)x, & \text{if } x \in N.
\end{cases}
\end{equation*}

We obtain the following immediate consequence.

\begin{theorem}
\label{theorem_condition_only_0_fixed_point}
The composition $F=f_n\circ\cdots\circ f_2\circ f_1$ with permutation polynomials $f_i(x)=a_ix^{(q+1)/2}+b_ix$ has $x=0$ as its only fixed point if and only if $M_S,M_N,M_S',M_N'\neq 1$.
\end{theorem}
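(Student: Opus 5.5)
The argument reads the fixed points of $F$ off the two piecewise-linear descriptions of Case 1 and Case 2. It has two parts: first check that $0$ is always fixed, then characterize when a nonzero fixed point exists on $S$ or on $N$.

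First, $0$ is always fixed. Each $f_i(0)=0$, and this holds in both cases and in the piecewise form \eqref{pw-fn}. So the statement reduces to: $F$ has no fixed point in $\mathbb{F}_q^\times=S\cup N$ if and only if the relevant multipliers differ from $1$. The statement is naturally read as saying that in Case 1 the condition is $M_S,M_N\neq1$, and in Case 2 it is $M_S',M_N'\neq1$. I will make this explicit and treat the cases separately.

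Case 1. Since every $b_i\pm a_i\in S$, each $f_i$ maps $S$ into $S$ and $N$ into $N$. By induction on $n$, $F(x)=M_Sx$ for $x\in S$ and $F(x)=M_Nx$ for $x\in N$, as displayed above. Now take $x\neq0$. The equation $F(x)=x$ with $x\in S$ reads $M_Sx=x$, which is equivalent to $M_S=1$ because $x$ is invertible. Likewise, for $x\in N$ it is equivalent to $M_N=1$. Both $S$ and $N$ are nonempty since $q$ is odd. Hence:
\begin{itemize}
\item if $M_S=1$, every element of $S$ is fixed, and if $M_N=1$, every element of $N$ is fixed;
\item conversely, if $M_S\neq1$ and $M_N\neq1$, no nonzero element is fixed.
\end{itemize}

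Case 2. Every $b_i\pm a_i\in N$, so each $f_i$ swaps $S$ and $N$. I will track the class of the iterate $x_k=f_k\circ\cdots\circ f_1(x)$ by induction on $k$:
\begin{itemize}
\item if $x\in S$, then $x_{k-1}\in S$ exactly when $k$ is odd, so $f_k$ multiplies by $b_k+a_k$ for odd $k$ and by $b_k-a_k$ for even $k$, i.e.\ by $b_k+(-1)^{k+1}a_k$;
\item if $x\in N$, the parities swap and $f_k$ multiplies by $b_k+(-1)^k a_k$.
\end{itemize}
This recovers $F(x)=M_S'x$ on $S$ and $F(x)=M_N'x$ on $N$. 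Note that $F(x)$ may lie in the other class (when $n$ is odd), but the fixed-point equation $F(x)=x$ is still simply $M_S'x=x$ or $M_N'x=x$. Cancelling $x\neq0$ gives the same equivalence as in Case 1.

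The main obstacle is the careful parity bookkeeping in Case 2, which confirms the exponents $(-1)^{i+1}$ and $(-1)^i$. A minor subtlety is that when $n$ is odd, $M_S'$ is then necessarily in $N$, so it is automatically $\neq1$; this is consistent with the claim and needs no separate treatment. The whole argument rests only on the piecewise description \eqref{pw-fn} and on the multiplicativity of the quadratic character.
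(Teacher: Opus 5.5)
Your proof is correct and follows the paper's route: the paper presents this theorem as an immediate consequence of the Case~1 and Case~2 piecewise-linear formulas for $F$, and you make that explicit by verifying those formulas (including the parity bookkeeping in Case~2) and cancelling $x\neq 0$ in $F(x)=Mx$. Your case-wise reading of the condition ($M_S,M_N\neq1$ in Case~1, $M_S',M_N'\neq1$ in Case~2) is the right one, since read literally with all four multipliers the ``only if'' direction can fail (e.g.\ in Case~1 one can have $M_S,M_N\neq1$ while $M_S'=1$).
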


The general case where the composition $F$ has polynomials with both $b_i\pm a_i\in S$ and $b_j\pm a_j\in N$ is omitted due to space constraints. 

\subsection{Girth Upper Bounds}\label{subsec:girth_minimum_distance_upper_bounds_binomials}
Let $q$ be odd. Let $\mathcal{C}$ be a protograph-based LDPC code with parity-check matrix 
$H={\left[\begin{matrix} 
a_{ij}x^{(q+1)/2}+b_{ij}x
\end{matrix}\right]}_{n_c\times n_v}$,
where $b_{ij}^2-a_{ij}^2 \in S$ for $i\in[n_c],j\in[n_v]$. Similar to the construction using monomials, here we remove the rows and columns corresponding to the fixed point 0.
\begin{definition}
\label{definition_type_square_nonsquare}
For $f(x)=ax^{(q+1)/2}+bx$ with $b^2-a^2\in S$, we say that $f$ (and its corresponding permutation matrix) is of type $\bm{\mathsf{S}}$ if $b\pm a\in S$, and of type $\bm{\mathsf{N}}$ if $b\pm a\in N$.
\end{definition}
Assume that $n_c\geq2$. To compute girth upper bounds on these protograph-based LDPC codes, 
first notice that if $f_i$ and $f_j$ have corresponding permutation matrices $F_i$ and $F_j$, then there are four possible options for the type of the columns of a $2\times L$ submatrix of $H$: 
\begin{equation*}
\left[\begin{matrix}F_i \\ F_j\end{matrix}\right] \in \left\{ \left[\begin{matrix}\bm{\mathsf{S}} \\ \bm{\mathsf{S}}\end{matrix}\right], \left[\begin{matrix}\bm{\mathsf{S}} \\ \bm{\mathsf{N}}\end{matrix}\right], \left[\begin{matrix}\bm{\mathsf{N}} \\ \bm{\mathsf{S}}\end{matrix}\right], \left[\begin{matrix}\bm{\mathsf{N}} \\ \bm{\mathsf{N}}\end{matrix}\right] \right\}.
\end{equation*}
We will analyze how these four possible options affect the composition of these polynomials in Theorem \ref{cycles_and_fixed_points_pp}. It is important to mention that a polynomial $f$ of type $\bm{\mathsf{S}}$ sends squares to squares and non-squares to non-squares, so the restrictions to $S$ and $N$ are $f|_S:S\rightarrow S$ and $f|_N:N\rightarrow N$, respectively. On the other hand, a polynomial $f$ of type $\bm{\mathsf{N}}$ sends squares to non-squares and non-squares to squares, so the restrictions to $S$ and $N$ are $f|_S:S\rightarrow N$ and $f|_N:N\rightarrow S$, respectively. Consider the composition $f_j^{-1}\circ f_i$ and $x\in\mathbb{F}_q^\times$.
\begin{itemize}
\item \textbf{Case 1:} If $x\in S$, then $f_i=f_{i+}$ and $f_{i+}(x)\in S$, which implies that $f_j^{-1}=f_{j+}^{-1}$ and $f_{j+}^{-1}(f_{i+}(x))\in S$. If $x\in N$, then $f_i=f_{i-}$ and $f_{i-}(x)\in N$, which implies that $f_j^{-1}=f_{j-}^{-1}$ and $f_{j-}^{-1}(f_{i-}(x))\in N$.

\item \textbf{Case 2:} If $x\in S$, then $f_i=f_{i+}$ and $f_{i+}(x)\in S$, which implies that $f_j^{-1}=f_{j-}^{-1}$ and $f_{j-}^{-1}(f_{i+}(x))\in N$. If $x\in N$, then $f_i=f_{i-}$ and $f_{i-}(x)\in N$, which implies that $f_j^{-1}=f_{j+}^{-1}$ and $f_{j+}^{-1}(f_{i-}(x))\in S$.

\item \textbf{Case 3:} If $x\in S$, then $f_i=f_{i+}$ and $f_{i+}(x)\in N$, which implies that $f_j^{-1}=f_{j-}^{-1}$ and $f_{j-}^{-1}(f_{i+}(x))\in N$. If $x\in N$, then $f_i=f_{i-}$ and $f_{i-}(x)\in S$, which implies that $f_j^{-1}=f_{j+}^{-1}$ and $f_{j+}^{-1}(f_{i-}(x))\in S$.

\item \textbf{Case 4:} If $x\in S$, then $f_i=f_{i+}$ and $f_{i+}(x)\in N$, which implies that $f_j^{-1}=f_{j+}^{-1}$ and $f_{j+}^{-1}(f_{i+}(x))\in S$. If $x\in N$, then $f_i=f_{i-}$ and $f_{i-}(x)\in S$, which implies that $f_j^{-1}=f_{j-}^{-1}$ and $f_{j-}^{-1}(f_{i-}(x))\in N$.
\end{itemize}
From this discussion, notice that we can divide these four cases in two categories. Category 1 consists of Cases 2 and 3, which are characterized by having a composition $f_j^{-1}\circ f_i$ 
that sends squares to non-squares and vice versa. 
On the other hand, Category 2 consists of Cases 1 and 4, which are characterized by having a composition $f_j^{-1}\circ f_i$ 
that sends squares to squares and non-squares to non-squares.

\begin{theorem}
\label{girth_upper_bound_binomials}
Let $\mathcal{C}$ be a protograph-based code with parity-check matrix $H$ using $f(x)=ax^{(q+1)/2}+bx$. 
Then we have the following.
\begin{enumerate}[label=(\alph*)]
    \item If the protograph has a $2\times3$ all-ones submatrix, then $\text{girth}(H)\leq20$.
    \item If the protograph has a $2\times4$ all-ones submatrix, then $\text{girth}(H)\leq16$.
    \item If the protograph has a $2\times L$ all-ones submatrix with $L\geq5$, then $\text{girth}(H)\leq12$.
\end{enumerate}
\end{theorem}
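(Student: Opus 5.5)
We would reduce everything to a computation in a small group, using Corollary \ref{cycles_and_fixed_points_pp}. Fix a $2\times L$ all-ones submatrix with rows $i,j$ and columns $1,\dots,L$. For column $k$ set $g_k=f_{jk}^{-1}\circ f_{ik}$. A TBC walk of length $2m$ confined to this submatrix visits a cyclic sequence of columns $k_1,k_2,\dots,k_m$ with $m$ even and consecutive entries distinct, cyclically. The composition in Corollary \ref{cycles_and_fixed_points_pp} is then $g_{k_m}^{-1}\circ g_{k_{m-1}}\circ\cdots\circ g_{k_2}^{-1}\circ g_{k_1}$, or the same word in the opposite orientation. So it suffices to exhibit, for each type pattern of the columns, such a word of length $m\le 10$, $8$, $6$ respectively that has a nonzero fixed point.

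By the piecewise description \eqref{pw-fn} and the case analysis preceding the theorem, every $g_k$ acts on $\mathbb{F}_q^\times=S\cup N$ as follows:
\begin{itemize}
\item a Category 2 map $g_k$ multiplies by a scalar $\alpha_k$ on $S$ and by a scalar $\beta_k$ on $N$;
\item a Category 1 map $g_k$ multiplies by scalars as well, but exchanges $S$ and $N$.
\end{itemize}
Hence all $g_k$ lie in the group $G=D\rtimes\mathbb{Z}_2$. Here $D\cong(\mathbb{F}_q^\times)^2$ is the abelian group of maps that preserve $S$ and $N$ and act by a scalar on each. The $\mathbb{Z}_2$ factor records whether $S$ and $N$ are swapped, and it acts on $D$ by exchanging the two coordinates. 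A word equal to the identity in $G$ fixes every nonzero point. It therefore gives a cycle of the corresponding length by Corollary \ref{cycles_and_fixed_points_pp} and Theorem \ref{cycles_and_fixed_points_scalar}.

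The key observation is that $g_a g_b^{-1}\in D$ whenever columns $a,b$ have the same category, and elements of $D$ commute.

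\textbf{Part (c).} For $L\ge5$, the pigeonhole principle gives three columns $a,b,c$ of the same category. Then $g_ag_b^{-1}$, $g_cg_a^{-1}$ and $g_bg_c^{-1}$ all lie in $D$, so they commute. Their product telescopes after reordering, so
$$g_a g_b^{-1} g_c g_a^{-1} g_b g_c^{-1}=\mathrm{id}.$$
This is a length-12 TBC walk, the same walk as in Theorem \ref{girth_upper_bound_12_powers_of_x}, so $\text{girth}(H)\le12$.

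\textbf{Parts (a) and (b).} With $L=3$ we only get two columns $a,b$ of the same category, say with $u=g_ag_b^{-1}\in D$, plus a third column $c$. With $L=4$ we get either three columns of the same category, which gives 12 at once, or two pairs $\{a,b\}$ and $\{c,d\}$ of the same category. In the two-pair case both $u=g_ag_b^{-1}$ and $v=g_cg_d^{-1}$ lie in $D$. Their commutator $u v u^{-1} v^{-1}$ is a length-8 word in the columns that has consecutive entries distinct, is the identity, and gives 16.

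For $L=3$ I would conjugate $u$ by $g_c$ to get a second element $w$ of $D$. If $g_c$ has the category of $a,b$ we are back in the first case. Otherwise conjugation by a swap-type element exchanges coordinates, but $w$ still commutes with $u$. The word $u\,(g_c g_a^{-1})\,u^{-1}\,(g_a g_c^{-1})$, or a suitable variant, should be arranged to be the identity in $G$. After cancellation this is a word of length 10 in the columns $a,b,c$ with consecutive entries distinct, which gives 20.

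The main obstacle is this $L=3$ mixed-category case. One must choose the word carefully so that it reduces to a commutator of two elements of $D$ while still alternating $g$ and $g^{-1}$ with cyclically distinct consecutive columns. I would first enumerate the length-10 alternating words over three letters and check which of them map to commutators in $D$, by tracking the $\mathbb{Z}_2$ parity and the coordinate swaps. If a word only yields identity on $S$ rather than globally, that still suffices, since we only need one nonzero fixed point.
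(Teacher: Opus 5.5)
You set things up correctly. Your group $G=D\rtimes\mathbb{Z}_2$ is right, except that $D\cong S\times S$, since the scalars must be squares for $S$ and $N$ to be preserved. Parts (c) and (b) are then complete. In particular, your two-pair commutator $g_ag_b^{-1}g_cg_d^{-1}g_bg_a^{-1}g_dg_c^{-1}$ is a valid 16-walk, which is more than the paper writes out for (b).

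\textbf{The gap.} Part (a) in the mixed-category case is left unproved, and this is the only case where the bound $20$ rather than $12$ is needed. Your candidate $u(g_cg_a^{-1})u^{-1}(g_ag_c^{-1})$ does not have length 10 after cancellation. It collapses to $g_ag_b^{-1}g_cg_a^{-1}g_bg_c^{-1}$, which is just the 12-walk. Since $s=g_cg_a^{-1}$ is swap-type, this word equals $u\,\bar u^{-1}$, where $\bar u$ is $u$ with its two scalars exchanged. It therefore acts by $u_1/u_2$ on $S$ and by $u_2/u_1$ on $N$, and generically has no nonzero fixed point. That is as it must be, since the paper exhibits $2\times3$ examples of girth 20. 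Your other suggestions also fail. Conjugating $u$ by $g_c$ gives non-alternating words. Conjugating by an alternating element such as $g_cg_a^{-1}$ gives an alternating word of length 12, i.e. a 24-walk.

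\textbf{The missing idea.} Since $[G:D]=2$, every square in $G$ lies in $D$. Hence $w=(g_cg_b^{-1})^2\in D$ commutes with $u=g_ag_b^{-1}$, and
\[
uwu^{-1}w^{-1}=g_ag_b^{-1}g_cg_b^{-1}g_cg_a^{-1}g_bg_c^{-1}g_bg_c^{-1}=\mathrm{id}.
\]
This word alternates correctly. Its cyclic column sequence is $a,b,c,b,c,a,b,c,b,c$, with consecutive entries distinct, so it is a TBC walk of length 20 with every nonzero point fixed. Up to cyclic rotation and the relabeling $(a,b,c)=(3,2,1)$, it is exactly the paper's 20-walk $W$.

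Any three columns contain two of the same category, so this single word settles (a) for every type pattern. With it added, your framework gives a uniform proof that replaces the paper's case-by-case verification at $x=1$ and also supplies the omitted part (b).
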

\begin{proof}
For $L\geq3$, consider any $2\times L$ submatrix
\begin{equation*}
B_{2\times L}=\left[\begin{matrix}f_1&f_2&f_3&\cdots&f_L \\ f_{L+1}&f_{L+2}&f_{L+3}&\cdots&f_{2L}\end{matrix}\right]
\end{equation*}
of $H$, 
where $f_i=a_ix^{(q+1)/2}+b_ix$ satisfies $b_i^2-a_i^2\in S$ for each $i$.

(a) Take all the columns of $B_{2\times3}$ to be in Category 1, say
\begin{equation*}
B_{2\times 3 } = \left[\begin{matrix}f_1&f_2&f_3 \\ f_{4}&f_{5}&f_{6}\end{matrix}\right] \gets \left[\begin{matrix}\bm{\mathsf{N}}&\bm{\mathsf{S}}&\bm{\mathsf{N}} \\ \bm{\mathsf{S}}&\bm{\mathsf{N}}&\bm{\mathsf{S}}\end{matrix}\right],
\end{equation*}
and let $W=[e_1,e_4,e_5,e_2,e_3,e_6,e_4,e_1,e_2,e_5,e_6,e_3]$, where $e_i$ is the edge associated to $f_i$. Then $W$ is a TBC walk of length 12 in the protograph. If we take $x=1$, then the composition of the polynomials following Theorem \ref{cycles_and_fixed_points_pp}, in the order prescribed by $W$, gives
\begin{align*}
f_{3}^{-1}&f_{6}f_{5}^{-1}f_{2}f_{1}^{-1}f_{4}f_{6}^{-1}f_{3}f_{2}^{-1}f_{5}f_{4}^{-1}f_{1}(1)= \\
&f_{3+}^{-1}f_{6-}f_{5-}^{-1}f_{2+}f_{1+}^{-1}f_{4-}f_{6-}^{-1}f_{3+}f_{2+}^{-1}f_{5-}f_{4-}^{-1}f_{1+}(1)=1.
\end{align*}
Hence, there is a 12-cycle in the Tanner graph. The analysis for any other combination of three columns in Category 1 is similar and omitted for space constraints. 

Now take all the columns of $B_{2\times3}$ to be in Category 2, say
\begin{equation*}
B_{2\times 3 } = \left[\begin{matrix}f_1&f_2&f_3 \\ f_{4}&f_{5}&f_{6}\end{matrix}\right] \gets \left[\begin{matrix}\bm{\mathsf{N}}&\bm{\mathsf{S}}&\bm{\mathsf{N}} \\ \bm{\mathsf{N}}&\bm{\mathsf{S}}&\bm{\mathsf{N}}\end{matrix}\right],
\end{equation*}
and let $W=[e_1,e_4,e_5,e_2,e_3,e_6,e_4,e_1,e_2,e_5,e_6,e_3]$, which is a TBC walk of length 12 in the protograph. If we take $x=1$, then the corresponding composition gives
\begin{align*}
f_{3}^{-1}&f_{6}f_{5}^{-1}f_{2}f_{1}^{-1}f_{4}f_{6}^{-1}f_{3}f_{2}^{-1}f_{5}f_{4}^{-1}f_{1}(1)= \\
&f_{3+}^{-1}f_{6+}f_{5+}^{-1}f_{2+}f_{1+}^{-1}f_{4+}f_{6+}^{-1}f_{3+}f_{2+}^{-1}f_{5+}f_{4+}^{-1}f_{1+}(1)=1.
\end{align*}
Hence, there is a 12-cycle in the Tanner graph. The analysis for any other combination of three columns in Category 2 is similar and omitted for space constraints. 

Now, assume that $B_{2\times3}$ has columns in both Category 1 and Category 2, say
\begin{equation*}
B_{2\times 3 } = \left[\begin{matrix}f_1&f_2&f_3 \\ f_{4}&f_{5}&f_{6}\end{matrix}\right] \gets \left[\begin{matrix}\bm{\mathsf{S}}&\bm{\mathsf{S}}&\bm{\mathsf{N}} \\ \bm{\mathsf{S}}&\bm{\mathsf{N}}&\bm{\mathsf{S}}\end{matrix}\right],
\end{equation*}
and let $W=[e_2,\allowbreak e_5,\allowbreak e_4,\allowbreak e_1,\allowbreak e_2,\allowbreak e_5,\allowbreak e_6,\allowbreak e_3,\allowbreak e_1,\allowbreak e_4,\allowbreak e_5,\allowbreak e_2,\allowbreak e_1,\allowbreak e_4,\allowbreak e_5,\allowbreak e_2,\allowbreak e_3,\allowbreak e_6,\allowbreak e_4,\allowbreak e_1]$, which is a TBC walk of length 20 in the protograph. If we take $x=1$, then the corresponding composition gives
\begin{align*}
	&\begin{aligned}
    f_{1}^{-1}f_{4}f_{6}^{-1}f_{3}f_{2}^{-1}f_{5}&f_{4}^{-1}f_{1}f_{2}^{-1} \cdots \\ &\cdots f_{5}f_{4}^{-1}f_{1}f_{3}^{-1}f_{6}f_{5}^{-1}f_{2}f_{1}^{-1}f_{4}f_{5}^{-1}f_{2}(1)
	\end{aligned} \\
	&\begin{aligned}
	=f_{1+}^{-1}&f_{4+}f_{6+}^{-1}f_{3-}f_{2-}^{-1}f_{5+}f_{4+}^{-1}f_{1+}f_{2+}^{-1}f_{5-}\cdots \\ &\cdots f_{4-}^{-1}f_{1-}f_{3-}^{-1}f_{6+}f_{5+}^{-1}f_{2-}f_{1-}^{-1}f_{4-}f_{5-}^{-1}f_{2+}(1)=1.
	\end{aligned}
\end{align*}
Hence, there is a 20-cycle in the Tanner graph. The analysis for any other combination of two columns in either category and one column in the other category is similar and omitted for space constraints. 

(b) Similar to part (a) and omitted for space constraints.

(c) Three columns from one category are always found.
\end{proof}

Since the composition of permutation polynomials of the form $f(x)=ax^{(q+1)/2}+bx$ does not commute in general, the corresponding permutation matrices do not commute, which means that the upper bound on the minimum distance for QC-LDPC codes does not apply. It is future work to investigate this upper bound.

\vspace{-2mm}
\section{Simulations}
\label{sec:simulations}
In this section, we compare the performance of protograph-based codes proposed in this paper---referred generically as PM-LDPC---with APM-LDPC codes
under the belief propagation decoding algorithm \cite{fmi99}. As a baseline, we also include an instance of QC-LDPC codes.

Consider the polynomial parity-check matrices $F_{2\times3}=\left[\begin{matrix}f_{ij}(x)\end{matrix}\right]$, $G_{2\times3}=\left[\begin{matrix}g_{ij}(x)\end{matrix}\right]$, and $H_{2\times3}=\left[\begin{matrix}h_{ij}(x)\end{matrix}\right]$, with $f_{ij}(x)\in\mathbb{F}_{101}[x]$ and $g_{ij}(x),h_{ij}(x)\in\mathbb{Z}_{100}[x]$, for $i\in[2],j\in[3]$, as in Table \ref{table_ex_2x3_constructions}. 
For lifting factor $N=q-1=100$, both $F_{2\times3}$ and $G_{2\times3}$ have girth 20 with 200 and 350 20-cycles, respectively, and $H_{2\times3}$ has girth 12 with 100 12-cycles. The parameters of the linear codes obtained from these constructions are $[300,101,10]$ for both $F_{2\times3}$ and $G_{2\times3}$, and $[300,101,6]$ for $H_{2\times3}$. 
\begin{table}[t]
\caption{PMs, APMs, and CPMs used to construct $F_{2\times3}$, $G_{2\times3}$, and $H_{2\times3}$.}
\label{table_ex_2x3_constructions}
\centering
\begin{tabular}{||c|c|c|c||}
\hline 
\multicolumn{1}{||c||}{} & PMs & APMs & CPMs \\ \hline \hline
\multicolumn{1}{||c||}{$(i,j)$} & $f_{ij}(x)$ & $g_{ij}(x)$ & $h_{ij}(x)$ \\ \hline \hline
\multicolumn{1}{||c||}{$(0,0)$} & $52x^{51}+87x$ & $x$ & $x+81$  \\ \hline
\multicolumn{1}{||c||}{$(0,1)$} & $68x^{51}+36x$ & $x$ & $x+14$  \\ \hline
\multicolumn{1}{||c||}{$(0,2)$} & $74x^{51}+42x$ & $x$ & $x+3$  \\ \hline
\multicolumn{1}{||c||}{$(1,0)$} & $40x^{51}+58x$ & $x$ & $x+94$  \\ \hline
\multicolumn{1}{||c||}{$(1,1)$} & $82x^{51}+49x$ & $x+5$ & $x+35$  \\ \hline
\multicolumn{1}{||c||}{$(1,2)$} & $62x^{51}+11x$ & $11x+1$ & $x+31$  \\
\hline 
\end{tabular}
\end{table}
We define the matrices $F_{2\times4}$, $G_{2\times4}$, and $H_{2\times4}$ in a similar way 
using Table \ref{table_ex_2x4_constructions}.
For lifting factor $N=q-1=100$, both $F_{2\times4}$ and $G_{2\times4}$ have girth 16 with 800 and 1350 16-cycles, respectively, and $H_{2\times4}$ has girth 12 with 400 12-cycles. The parameters of the linear codes obtained from these constructions are $[400,201,8]$ for both $F_{2\times4}$ and $G_{2\times4}$, and $[400,201,6]$ for $H_{2\times4}$. 
As seen in Figure~\ref{fig:simulations}, PM-LDPC codes yield a slight decoding gain. Although not discussed here in detail due to space limitations, the same decoding gain can be achieved with smaller lift sizes.

\begin{table}[t]
\caption{PMs, APMs, and CPMs used to construct $F_{2\times4}$, $G_{2\times4}$, and $H_{2\times4}$.} 
\label{table_ex_2x4_constructions}
\centering
\begin{tabular}{||c|c|c|c||}
\hline 
\multicolumn{1}{||c||}{} & PMs & APMs & CPMs \\ \hline \hline
\multicolumn{1}{||c||}{$(i,j)$} & $f_{ij}(x)$ & $g_{ij}(x)$ & $h_{ij}(x)$ \\ \hline \hline
\multicolumn{1}{||c||}{$(0,0)$} & $57x^{51}+15x$ & $x$ & $x+81$  \\ \hline
\multicolumn{1}{||c||}{$(0,1)$} & $88x^{51}+8x$ & $x$ & $x+14$  \\ \hline
\multicolumn{1}{||c||}{$(0,2)$} & $62x^{51}+x$ & $x$ & $x+3$  \\ \hline
\multicolumn{1}{||c||}{$(0,3)$} & $31x^{51}+56x$ & $x$ & $x+94$  \\ \hline
\multicolumn{1}{||c||}{$(1,0)$} & $75x^{51}+93x$ & $x$ & $x+35$  \\ \hline
\multicolumn{1}{||c||}{$(1,1)$} & $31x^{51}+45x$ & $69x+87$ & $x+31$  \\ \hline
\multicolumn{1}{||c||}{$(1,2)$} & $60x^{51}+8x$ & $69x+73$ & $x+28$  \\ \hline
\multicolumn{1}{||c||}{$(1,3)$} & $83x^{51}+44x$ & $x+34$ & $x+17$  \\
\hline 
\end{tabular}
\vspace{-15pt}
\end{table}

\begin{figure}[t]
\begin{center}
\includegraphics[width=0.80\columnwidth]{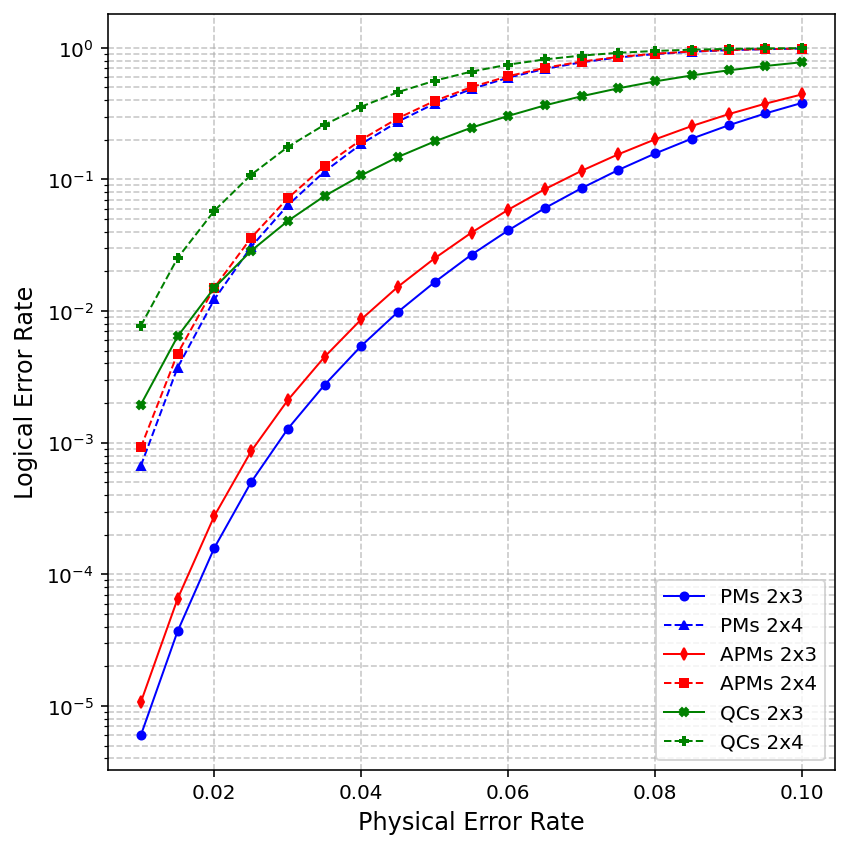}\vspace{-5mm}
\end{center}
\caption{Simulated decoding performance for the codes constructed using the binomial permutation polynomials (PMs), affine permutation polynomials (APMs), and circulant permutation matrices (CPMs) in Tables \ref{table_ex_2x3_constructions} and \ref{table_ex_2x4_constructions}.} 
\label{fig:simulations}\vspace{-5mm}
\end{figure}

\section{Concluding Remarks}
\label{sec:conclusion}
In this paper, we propose constructing protograph-based LDPC codes from permutation polynomials over finite fields, generalizing the notion of QC-LDPC and APM-LDPC codes.
We focus on two simple classes of permutation polynomials, namely permutation monomials $f(x) = x^n$ and permutation binomials $f(x)=ax^{(q+1)/2}+bx$ in $\mathbb{F}_q[x]$.
Both families are closed under composition and inverses, which in turn enables an algebraic and efficient girth analysis.
While an interesting case study, permutation monomials commute and typically have several fixed points, and this yields a deteriorated performance.
On the other hand, the permutation binomials do not commute. In fact, we show that they are piecewise affine; see Proposition~\ref{binomials_permutation_polynomials}.
Simulations suggest that protograph-based LDPC codes constructed with said permutation binomials have a lower error rate than APM-LDPC codes.
In addition, the same parameters can be obtained with a lower lift size. For instance, using permutation binomials, girth 20 can be achieved with lift size 18 (that is, binomials of $\mathbb{F}_{19}$ and eliminating the fixed point $x=0$). In contrast, for affine permutations, to achieve girth 20 one needs lift size 20.

Based on the simplicity of the permutation polynomials considered in this paper, we are cautiously optimistic that more sophisticated permutation polynomials will yield meaningful coding gains. 
In future work, we will explore other families of permutation polynomials that are closed under composition and inverses such as the Dickson polynomials~\cite{d1896}.

Another interesting future direction is the exploration of quantum LDPC codes.
Kasai~\cite{k25} has already demonstrated that affine permutations can be used for the construction of high-girth quantum LDPC codes.
In addition, the work of~\cite{komoto-kasai} lays out a decoding algorithm with linear (in the number of physical qubits) computational cost, making them appealing for large-scale fault-tolerant quantum computing.
Again, we are cautiously optimistic that these works can be further improved using general permutation polynomials.

\balance
\bibliographystyle{IEEEtran}
\bibliography{References_May_2026.bib}

\begin{thebibliography}{10}
\providecommand{\url}[1]{#1}
\csname url@rmstyle\endcsname
\providecommand{\newblock}{\relax}
\providecommand{\bibinfo}[2]{#2}
\providecommand\BIBentrySTDinterwordspacing{\spaceskip=0pt\relax}
\providecommand\BIBentryALTinterwordstretchfactor{4}
\providecommand\BIBentryALTinterwordspacing{\spaceskip=\fontdimen2\font plus
\BIBentryALTinterwordstretchfactor\fontdimen3\font minus
  \fontdimen4\font\relax}
\providecommand\BIBforeignlanguage[2]{{%
\expandafter\ifx\csname l@#1\endcsname\relax
\typeout{** WARNING: IEEEtran.bst: No hyphenation pattern has been}%
\typeout{** loaded for the language `#1'. Using the pattern for}%
\typeout{** the default language instead.}%
\else
\language=\csname l@#1\endcsname
\fi
#2}}

\bibitem{lcz+06}
Z.~Li, L.~Chen, L.~Zeng, S.~Lin, and W.~H. Fong, ``Efficient encoding of
  quasi-cyclic low-density parity-check codes,'' \emph{IEEE Transactions on
  Communications}, vol.~54, no.~1, pp. 71--81, January 2006.

\bibitem{wc07}
Z.~Wang and Z.~Cui, ``A memory efficient partially parallel decoder
  architecture for quasi-cyclic {LDPC} codes,'' \emph{IEEE Transactions on Very
  Large Scale Integration (VLSI) Systems}, vol.~15, no.~4, pp. 483--488, April
  2007.

\bibitem{md01}
D.~J.~C. MacKay and M.~C. Davey, ``Evaluation of {Gallager} codes for short
  block length and high rate applications,'' in \emph{Codes, Systems, and
  Graphical Models}, B.~Marcus and J.~Rosenthal, Eds., vol. 123.\hskip 1em plus
  0.5em minus 0.4em\relax Springer New York, 2001, pp. 113--130.

\bibitem{fossorier04}
M.~P.~C. Fossorier, ``Quasi-cyclic low-density parity-check codes from
  circulant permutation matrices,'' \emph{IEEE Transactions on Information
  Theory}, vol.~50, no.~8, pp. 1788--1793, August 2004.

\bibitem{sv12}
R.~Smarandache and P.~O. Vontobel, ``Quasi-cyclic {LDPC} codes: Influence of
  proto- and {Tanner}-graph structure on minimum {Hamming} distance upper
  bounds,'' \emph{IEEE Transactions on Information Theory}, vol.~58, no.~2, pp.
  585--607, February 2012.

\bibitem{mb01}
Y.~Mao and A.~H. Banihashemi, ``A heuristic search for good low-density
  parity-check codes at short block lengths,'' \emph{IEEE International
  Conference on Communications Proceedings}, vol.~1, pp. 41--44, June 2001.

\bibitem{hc06}
T.~R. Halford and K.~M. Chugg, ``An algorithm for counting short cycles in
  bipartite graphs,'' \emph{IEEE Transactions on Information Theory}, vol.~52,
  no.~1, pp. 287--292, January 2006.

\bibitem{kb12_b}
M.~Karimi and A.~H. Banihashemi, ``Efficient algorithm for finding dominant
  trapping sets of {LDPC} codes,'' \emph{IEEE Transactions on Information
  Theory}, vol.~58, no.~11, pp. 6942--6958, November 2012.

\bibitem{ru01}
T.~Richardson and R.~Urbanke, ``Efficient encoding of low-density parity-check
  codes,'' \emph{IEEE Transactions on Information Theory}, vol.~47, no.~2, pp.
  638--656, February 2001.

\bibitem{h1863}
C.~Hermite, ``Sur les fonctions de sept lettres,'' \emph{Comptes Rendus de
  l'Acad\'emie des Sciences de Paris}, vol.~57, no. 750--757, 1863.

\bibitem{d1896}
L.~E. Dickson, ``The analytic representation of substitutions on a power of a
  prime number of letters with a discussion of the linear group,'' \emph{Annals
  of Mathematics}, vol.~11, no. 1/6, pp. 65--120, 1896.

\bibitem{d13}
C.~Ding, ``Cyclic codes from some monomials and trinomials,'' \emph{SIAM
  Journal on Discrete Mathematics}, vol.~27, no.~4, pp. 1977--1994, November
  2013.

\bibitem{dy06}
C.~Ding and J.~Yuan, ``A family of skew hadamard difference sets,''
  \emph{Journal of Combinatorial Theory, Series A}, vol. 113, no.~7, pp.
  1526--1535, October 2006.

\bibitem{lm84}
\BIBentryALTinterwordspacing
R.~Lidl and W.~B. M{\"u}ller, \emph{Permutation Polynomials in
  RSA-Cryptosystems}.\hskip 1em plus 0.5em minus 0.4em\relax Boston, MA:
  Springer US, 1984, pp. 293--301. [Online]. Available:
  \url{https://doi.org/10.1007/978-1-4684-4730-9_23}
\BIBentrySTDinterwordspacing

\bibitem{ccz98}
C.~Carlet, P.~Charpin, and V.~Zinoviev, ``Codes, bent functions and
  permutations suitable for des-like cryptosystems,'' \emph{Designs, Codes and
  Cryptography}, vol.~15, no.~2, pp. 125--156, November 1998.

\bibitem{d99a}
H.~Dobbertin, ``Almost perfect nonlinear power functions on {GF}(2n): The niho
  case,'' \emph{Information and Computation}, vol. 151, no.~1, pp. 57--72, May
  1999.

\bibitem{d99b}
------, ``Almost perfect nonlinear power functions on {GF}(2n): the welch
  case,'' \emph{IEEE Transactions on Information Theory}, vol.~45, no.~4, pp.
  1271--1275, May 1999.

\bibitem{myp06}
S.~Myung, K.~Yang, and D.~S. Park, ``A combining method of structured {LDPC}
  codes from affine permutation matrices,'' in \emph{2006 IEEE International
  Symposium on Information Theory}, July 2006, pp. 674--678.

\bibitem{yk19}
R.~Yoshida and K.~Kasai, ``Linear permutation polynomial codes,'' in \emph{2019
  IEEE International Symposium on Information Theory (ISIT)}, 2019, pp. 66--70.

\bibitem{gg24}
M.~Gholami and Z.~Gholami, ``An efficient algorithm for counting cycles in {QC}
  and {APM} {LDPC} codes,'' \emph{Facta Universitatis}, vol.~39, no.~4, pp.
  695--711, 2024.

\bibitem{k25}
K.~Kasai, ``Quantum error correction with girth-16 non-binary {LDPC} codes via
  affine permutation construction,'' in \emph{2025 13th International Symposium
  on Topics in Coding (ISTC)}, 2025, pp. 1--5.

\bibitem{tho03}
J.~Thorpe, ``Low-density parity-check ({LDPC}) codes constructed from
  protographs,'' \emph{Jet Propulsion Laboratory Pasadena, CA, INP Progress
  Report 42-154}, pp. 42--154, August 2003.

\bibitem{tan81}
R.~M. Tanner, ``A recursive approach to low complexity codes,'' \emph{IEEE
  Transactions on Information Theory}, vol.~27, no.~5, pp. 533--547, September
  1981.

\bibitem{fan00}
J.~L. Fan, ``Array codes as low-density parity-check codes,'' in \emph{2nd
  International Symposium on Turbo Codes and Related Topics}, September 2000,
  pp. 543--546.

\bibitem{c62}
L.~Carlitz, ``Some theorems on permutation polynomials,'' \emph{Bulletin of the
  American Mathematical Society}, vol.~68, no.~2, pp. 120--122, March 1962.

\bibitem{nr82}
H.~Niederreiter and K.~H. Robinson, ``Complete mappings of finite fields,''
  \emph{Journal of the Australian Mathematical Society}, vol.~33, no.~2, pp.
  197--212, October 1982.

\bibitem{fmi99}
M.~P.~C. Fossorier, M.~Mihaljevic, and H.~Imai, ``Reduced complexity iterative
  decoding of low-density parity check codes based on belief propagation,''
  \emph{IEEE Transactions on Communications}, vol.~47, no.~5, pp. 673--680, May
  1999.

\bibitem{komoto-kasai}
D.~Komoto and K.~Kasai, ``Quantum error correction near the coding theoretical
  bound,'' \emph{npj Quantum Information}, vol.~11, no. 154, 2025.

\end{thebibliography}

\end{document}